\newtheorem{theorem}{Theorem}
\newtheorem{proposition}{Proposition}
\newtheorem{lemma}{Lemma}
\theoremstyle{definition}
\newtheorem{assumption}{Assumption}
\pgfplotsset{width=16cm,compat=1.18}
  \let\corref\@gobble
\journal{Economics Letters}
\begin{document}

\begin{frontmatter}

\title{\textbf {Entry deterrence and antibiotic conservation under post-entry Bertrand competition}}

\author[HofstraEcon]{Roberto Mazzoleni\corref{cor1}}
\ead{roberto.mazzoleni@hofstra.edu}

\author[HofstraMath]{Hamza Virk}
\ead{hvirk2@pride.hofstra.edu}

\cortext[cor1]{Corresponding author}

\address[HofstraEcon]{Department of Economics, Hofstra University, Hempstead, NY 11549, USA}
\address[HofstraMath]{Department of Mathematics, Hofstra University, Hempstead, NY 11549, USA}

\begin{abstract}
We analyze how an incumbent antibiotic monopolist responds to the threat of post-entry Bertrand competition by a vertically differentiated rival.  In a two-period model where current production drives future resistance, Bertrand competition leads to a winner-take-all outcome. We find that strategic deterrence is optimal regardless of bacterial cross-resistance to prospective rival drugs. In contrast with post-entry Cournot competition, anticipated price competition provides the incumbent with a stronger strategic incentive for conservation.
\end{abstract}

\begin{keyword}
Entry Deterrence \sep Antibiotic Conservation \sep Subgame Perfect Nash Equilibrium \sep Bertrand Competition \sep Vertical Differentiation \sep Bacterial Cross-Resistance
\JEL L11 \sep L13 \sep D43 \sep I10 
\end{keyword}

\end{frontmatter}

\section{\small{Introduction}}
Antibiotic resistance has been framed as a tragedy of the commons \citep{Hardin1968,Tisdell1982}. Accordingly, the commons of antibiotic effectiveness are degraded by the combined effects of excessive use on the demand side, and the misalignment between society's interest in conserving effective therapies for bacterial infections and drug manufacturers' interest in profiting from innovative antibiotics before generics and rival drugs reach the market. 

If consumption of novel antibiotics exceeds socially optimal levels, proposals advocating for the extension of the duration and breadth of patent protection on antibiotic drugs aim to weaken the patent-holder's incentive to dissipate antibiotic effectiveness \citep{Kades2005,Gonzalez2022}. Such policies have well known drawbacks \citep{Laxminarayan2002, OUTTERSON2007,EswaranGallini2019} . Indeed, stronger protection for newly patented antibiotic drug influences negatively the incentive to develop future ones \citep{GilbertShapiro1990,Klemperer1990, Scotchmer1991} 

This is particularly important in light of the declining rate of discovery of new antibiotics and the phenomenon of bacterial cross-resistance and cross-sensitivity. When exposure to one antibiotic leads to the development of bacterial cross-resistance to others, drugs draw from the same pool of antibiotic effectiveness \citep{Laxminarayan2002}. But bacterial cross-resistance among antibiotics varies and, in some cases, bacterial cross-sensitivity may evolve, driving recurrent interest in drug cycling protocols \citep{ImamovicSommer2013,BaymStoneKishony2016}. 

These interactions have strategic implications for firms' decisions on horizontal and vertical differentiation, and entry deterrence \citep{HungSchmitt1988, DonnenfeldWeber1992, NohMoschini2006, KaraerErhun2015}. For instance, \citet{HerrmannNkuiyaDussault2013} analyzes how bacterial cross-resistance influences the innovation strategy of a firm attempting to develop a new antibiotic. Greater differentiation from existing generic antibiotics is costly but would reduce the extent to which a novel drug draws from the same pool of antibiotic effectiveness as the existing generic drug. \citet{Mazzoleni2025} focuses on how a patent-holding incumbent firm will respond to the threat of future entry by rival drug subject to varying degrees of bacterial cross-resistance. An important finding is that, depending on development costs and predicted bacterial cross-resistance to rival drugs, optimal deterrence requires conserving a drug's effectiveness rather than dissipating it. 

We extend the model in  \citet{Mazzoleni2025} by analyzing an incumbent firm's strategy facing the threat of entry and Bertrand competition rather than Cournot competition as in that model. We demonstrate two results.  First, strategic deterrence is optimal under broader conditions when post-entry Bertrand competition is expected. Second, and most significantly, optimal deterrence of a Bertrand competitor requires that the incumbent conserve antibiotic drug efficacy by producing less before entry than what a monopolist would produce in the absence of competitive entry threats.  In contrast, optimal deterrence under Cournot competition requires either conserving or depleting antibiotic effectiveness depending on the level of development costs and the degree of bacterial cross-resistance to the entrant drug.  

In our two-period model, first period output decisions by the incumbent affect competitive conditions of the post-entry duopoly by determining the effectiveness gap benefiting prospective rivals and the overall size of the second period market demand. Under Bertrand competition, the second period duopolistic market can never be profitable for the incumbent. In particular, a higher quality rival would successfully monopolize the market rather than share it with the incumbent, so that the incumbent has a stronger incentive to deter entry. This goal can be attained by reducing either the effectiveness gap separating the incumbent from the rival drug or the market size. Our model shows that optimal deterrence always requires reducing first-period output and conserving the incumbent drug's efficacy.

The following section presents the model and our main theoretical results. We compare them with those in \citet{Mazzoleni2025} in Section 3 and comment briefly on the policy implications of our work.

\section{\small {Optimal deterrence under varying conditions of entry costs and bacterial cross-resistance}}
We adopt the two-period model framework of \citet{Mazzoleni2025}.  Accordingly, an incumbent (I) operates as a monopolist and chooses output $X$ during the first period. During period 2, a potential entrant (E) observes $X$ and decides whether or not to incur a fixed entry cost $R \geq 0$ to participate in Bertrand competition with the incumbent.  If entry occurs, firms compete in prices (Bertrand).  We characterize the strategic behavior of the incumbent at the subgame perfect Nash Equilibrium of our model.

We assume a linear demand system consistent with the  Quasilinear Quadratic Utility Model (\citet{ChoneLinnemer2020}), so the first period's inverse demand is $P(X) = \alpha - \beta X$. First-period use generates resistance, so the second-period inverse demand functions are:
\begin{align}
    p_I(x, y, X) &= (\alpha - \theta X) - \beta(x+y) \label{eq:demand_I}\\
    p_E(x, y, X) &= (\alpha - \phi X) - \beta(x+y) \label{eq:demand_E}
\end{align}
where $x$ and $y$ are the outputs of the incumbent and entrant, $\theta$ is own-resistance, and $\phi$ is cross-resistance. This structure implies the antibiotics are perfect substitutes but vertically differentiated by effectiveness. Both firms produce their drugs at constant and identical marginal cost, $c$. Our analysis throughout this section builds on the following assumptions.

\begin{assumption}
We assume: $0<\theta < 2\beta$.
\label{assump:markup}
\end{assumption}
This assumption ensures a positive primary mark-up ($\alpha - \theta X - c$) in the second-period market for the incumbent drug should output in the first period be set at the static profit-maximizing level $X=(\alpha-c)/2 \beta$.  Thus, it would never be optimal for the incumbent to extract all profits in the first period and abandon the market in the second period.

\begin{assumption}
We assume $\theta \geq \phi \geq 0$. 
\label{assump:resistance}
\end{assumption}
Own-resistance is at least as great as cross-resistance. If $\theta = \phi$, the two drugs are equivalent as if the entrant firm's drug were a generic version of the incumbent's. If $\theta > \phi$, the entrant has a quality advantage over the incumbent and the effectiveness gap $\Delta(X) = (\theta - \phi)X$ increases linearly with the incumbent's first-period output.

\subsection{Monopoly Benchmark}

As a benchmark, we note that a two-period monopolist facing no entry threat would choose the output of the first period to maximize the sum of profits of the first and second periods: $\Pi^M(X)=\Pi_1(X) + \pi_2^M(X)$, where $\Pi_1(X) = (\alpha-c-\beta X)X$, and $\pi_2^M(X) = \frac{1}{4\beta}(\alpha-c - \theta X)^2$. Under Assumption \ref{assump:markup}, $\Pi^{M}(X)$ is strictly concave. The profit-maximizing output is $X_M^* = (\alpha-c)/(2\beta+\theta)$. The monopolist restricts output below the static optimum ($X_S^* = (\alpha-c)/2\beta$) to conserve future effectiveness.

\subsection{Second-Period Bertrand Competition}

We begin characterizing the subgame perfect Nash equilibrium of the game by studying equilibrium conditions of the second-period Bertrand competition subgame in the event that entry occurs. Given the demand equations  (\ref{eq:demand_I}) and (\ref{eq:demand_E}), if both firms are to produce and sell in the market, their prices must differ by exactly the quality differential: $p_E=p_I+\Delta(X)$. Either of the two will wish to undercut the rival until either prices fall to marginal cost or the entrant monopolizes the market.

\begin{lemma}
\
\begin{enumerate}

    \item If either $\theta=\phi$ or if  $\theta>\phi$ and $X=0$ (Generic Entry), then $p_{I}^{*}=p_{E}^{*}=c$ .
    \item If $\theta>\phi$ and $X>0$ (Differentiated Entry), then the entrant monopolizes the market and sets prices as follows:    \[
    p_{E}^{*}=
    \begin{cases}
        c+\Delta(X), & \text{if } X< \frac{\alpha-c}{2\theta-\phi} \\
        \\
        \frac{\alpha-\phi X+c}{2}, & \text{if } X\ge  \frac{\alpha-c}{2\theta-\phi}
    \end{cases}
    \]
\end{enumerate}
\label{lemma:pricing}
\end{lemma}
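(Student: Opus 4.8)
The plan is to exploit the vertical-differentiation structure to reduce the pricing game to a comparison of quality-adjusted prices, and then to solve a constrained monopoly problem for the entrant. First I would spell out the demand allocation implied by perfect substitution with a quality gap $\Delta(X)=(\theta-\phi)X$: because the entrant's drug is worth exactly $\Delta(X)$ more than the incumbent's, a consumer buys from the entrant whenever $p_E-p_I<\Delta(X)$, from the incumbent whenever $p_E-p_I>\Delta(X)$, and the market splits (under the usual indifference tie-breaking convention) only when $p_E=p_I+\Delta(X)$. This single observation drives the entire argument.

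For part 1, whenever $\Delta(X)=0$ --- either because $\theta=\phi$, or because $\theta>\phi$ but $X=0$ --- the two products are homogeneous, and the classical Bertrand undercutting argument applies verbatim: any common price above $c$ invites undercutting, any price below $c$ is loss-making, so the unique equilibrium is $p_I^{*}=p_E^{*}=c$.

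For part 2, with $\Delta(X)>0$ the entrant holds a strict quality advantage, and I would proceed in two steps. First, I would argue that in equilibrium the incumbent is driven to its floor $p_I^{*}=c$ and makes no sales: it never prices below $c$ (loss-making), and whenever $p_I>c$ the higher-quality entrant can set $p_E$ just below $p_I+\Delta(X)$, capture the whole market at a positive margin, and leave the incumbent with zero demand; the undercutting therefore unravels to $p_I=c$, at which point the entrant keeps the entire market for any $p_E\le c+\Delta(X)$. Second, given that the entrant is the sole seller, its residual demand obtained by setting $x=0$ in \eqref{eq:demand_E} is $y=\big((\alpha-\phi X)-p_E\big)/\beta$, and it maximizes $(p_E-c)\,y$ subject to the limit-pricing constraint $p_E\le c+\Delta(X)$. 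The unconstrained optimum is $p_E^{m}=(\alpha-\phi X+c)/2$, and equating it to the limit price yields the threshold $X=(\alpha-c)/(2\theta-\phi)$; since $p_E^{m}>c+\Delta(X)$ exactly when $X<(\alpha-c)/(2\theta-\phi)$, the entrant charges the binding limit price $c+\Delta(X)$ for small output and the interior monopoly price $(\alpha-\phi X+c)/2$ once $X\ge(\alpha-c)/(2\theta-\phi)$, which is the claimed schedule.

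The main obstacle I anticipate is the first step of part 2 --- rigorously establishing which firm captures the market and that the incumbent is pushed to marginal cost. Because Bertrand undercutting admits no exact best response (one always wants to shade the price by an arbitrarily small amount), this step rests essentially on the indifference tie-breaking convention that consumers purchase the higher-quality good when quality-adjusted prices coincide; without such a convention no pure-strategy equilibrium exists. Once the market allocation and the incumbent's competitive floor are pinned down, the remainder is a routine constrained-optimization computation.
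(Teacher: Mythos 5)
Your proof is correct and takes essentially the same route as the paper's: homogeneous-goods Bertrand undercutting for Case 1, and for Case 2 the entrant capturing the whole market at the lower of the limit price $c+\Delta(X)$ and the monopoly price $(\alpha-\phi X+c)/2$, with the crossover at $\tilde{X}=(\alpha-c)/(2\theta-\phi)$. The only difference is one of detail: you make explicit the quality-adjusted demand allocation, the constrained-monopoly derivation of the threshold, and the tie-breaking convention needed for a pure-strategy equilibrium, all of which the paper's terser proof leaves implicit.
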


\begin{proof}
\textit{Case 1:}  Products are homogeneous ($\Delta(X)=0$), so prices reflect marginal cost as in standard Bertrand competition: $p_{I}^{*}=p_{E}^{*}=c$.

\textit{Case 2:} A positive effectiveness gap ($\Delta(X)>0$) ensures that the entrant can drive the incumbent out of the market by setting prices at the lower between the limit price $p_{E}^{L}=c+\Delta(X)$ and its monopoly price $p_{E}^{M}=(\alpha-\phi X+c)/2$. The limit price increases in $X$ while the monopoly price decreases. The two expressions are equal when $X=(\alpha-c)/(2\theta-\phi)$, which we will henceforth refer to as $\tilde{X}$.
\end{proof}

Crucially, the incumbent earns zero profit post-entry ($\pi_{I}^{*}(X)=0$) in either of the two cases.

\subsection{Entry Decision}

The entrant enters if their expected profit exceeds the entry cost $R$.

\subsubsection{Case 1: \texorpdfstring{($\theta=\phi$)}{(theta=phi)} }

Generic entry is never profitable under Bertrand competition if $R>0$, so the incumbent can maximize profits by setting  $X=X_M^*$. If $R=0$, then entry may or may not occur as net profits are zero. Predicting entry, the incumbent would choose to maximize profits in the first period by choosing $X=X_S^*$. Otherwise, $X=X_M^*$ would be the optimal output choice. Thus, overproduction during the first period ($X_S^*>X_M^*$) could only occur in exceptional circumstances. 

\subsubsection{Case 2: \texorpdfstring{($\theta > \phi$)}{(theta > phi)}}

Unless $X=0$, the entrant captures the market. The entrant's gross profit is $\pi_{E}^{*}(X)=(p_{E}^{*}-c)y^{*}(X)$:

\begin{align}
\pi_{E}^{*}(X) =
\begin{cases}
    \pi_{1}(X) \equiv \frac{(\theta-\phi)X(\alpha-c-\theta X)}{\beta}, & \text{if } X <\tilde{X} \\
    \\
    \pi_{2}(X) \equiv \frac{(\alpha-c-\phi X)^{2}}{4\beta}, & \text{if } X \ge \tilde{X}\\ 
\end{cases}
\label{eq:profitE}
\end{align}
where $\tilde{X}=(\alpha-c)/(2\theta -\phi)$. When $X=0$, the quality differential vanishes and Bertrand competition results in zero profit as in the case of generic entry.

\begin{lemma}
The entrant's profit function $\pi_{E}^{*}(X)$ is continuous and continuously differentiable. It attains its maximum value at: (a) $X_{\pi E}^{*}=\frac{\alpha-c}{2\theta}$ if $\phi>0$ ; or (b) $X_{\pi E}^* \ge \frac{\alpha -c}{2 \theta} =\tilde{X}$ if $\phi=0$.
\label{lemma:entrant_profit}
\end{lemma}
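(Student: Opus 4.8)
The plan is to treat the two claims—regularity and the location of the maximizer—separately, exploiting the fact that $\pi_E^*$ is assembled by gluing two polynomials at the single breakpoint $\tilde{X}=(\alpha-c)/(2\theta-\phi)$. Away from $\tilde{X}$ each piece is $C^\infty$, so continuity and continuous differentiability can only fail at $\tilde{X}$, and I would verify both by direct substitution. For continuity I would evaluate both pieces at $X=\tilde{X}$; using the identities $\alpha-c-\theta\tilde{X}=(\theta-\phi)(\alpha-c)/(2\theta-\phi)$ and $\alpha-c-\phi\tilde{X}=2(\theta-\phi)(\alpha-c)/(2\theta-\phi)$, both $\pi_1(\tilde{X})$ and $\pi_2(\tilde{X})$ collapse to $(\theta-\phi)^2(\alpha-c)^2/[\beta(2\theta-\phi)^2]$, so the join is continuous.

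For the $C^1$ claim I would compute the one-sided derivatives $\pi_1'(X)=\tfrac{\theta-\phi}{\beta}\big[(\alpha-c)-2\theta X\big]$ and $\pi_2'(X)=-\tfrac{\phi}{2\beta}\big(\alpha-c-\phi X\big)$ and evaluate each at $\tilde{X}$. Using the same two identities, both reduce to $-\phi(\theta-\phi)(\alpha-c)/[\beta(2\theta-\phi)]$, so the left- and right-hand derivatives agree at $\tilde{X}$ and $\pi_E^*\in C^1$.

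For the maximizer I would read off monotonicity from the derivatives just obtained. Since $\theta>\phi$ in Case 2, $\pi_1$ is a strictly concave parabola with vertex at $(\alpha-c)/(2\theta)$, while on the range where the entrant actually sells the factor $\alpha-c-\phi X$ is positive, so for $\phi>0$ the second piece is strictly decreasing ($\pi_2'<0$). The decisive comparison is between the vertex and the breakpoint, and this is immediate from $2\theta>2\theta-\phi>0$, which gives $(\alpha-c)/(2\theta)<\tilde{X}$ when $\phi>0$ and $(\alpha-c)/(2\theta)=\tilde{X}$ when $\phi=0$. Splitting on the sign of $\phi$: if $\phi>0$ the vertex lies strictly inside $[0,\tilde{X})$, so $\pi_1$ rises to $X_{\pi E}^*=(\alpha-c)/(2\theta)$ and then falls, after which $\pi_2$ continues to fall, so the global maximum is the interior peak, giving (a); if $\phi=0$ the vertex coincides with $\tilde{X}$, $\pi_1$ is strictly increasing on $[0,\tilde{X}]$, and $\pi_2\equiv(\alpha-c)^2/(4\beta)$ is constant, so the maximum value is attained on the whole half-line $[\tilde{X},\infty)$, whose least element is $\tilde{X}=(\alpha-c)/(2\theta)$, giving (b).

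I expect the main obstacle to be the $C^1$-gluing computation: the derivative match at $\tilde{X}$ is the one step that is not visually obvious and must be confirmed algebraically. A secondary subtlety is the effective domain. For $\phi>0$ the algebraic expression $\pi_2$ turns upward again once $X>(\alpha-c)/\phi$, but this is spurious—there the unconstrained monopoly quantity $y^*=(\alpha-c-\phi X)/(2\beta)$ is negative, so the entrant's true profit is $0$ rather than $\pi_2(X)$. I would therefore note that $\pi_2$ represents monopoly profit only while $X\le(\alpha-c)/\phi$, beyond which $\pi_E^*=0$, so this branch never competes with the interior peak; moreover the incumbent never selects such large output, since $X\le X_S^*=(\alpha-c)/(2\beta)<(\alpha-c)/\phi$ because $\phi\le\theta<2\beta$. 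With the domain thus pinned down, the maximizer is exactly as stated.
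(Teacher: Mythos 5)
Your proof is correct and follows essentially the same route as the paper's: check value and derivative agreement of $\pi_1$ and $\pi_2$ at the breakpoint $\tilde{X}$, then combine strict concavity of $\pi_1$ (vertex at $(\alpha-c)/(2\theta)$) with the monotonicity of $\pi_2$ and the comparison $(\alpha-c)/(2\theta) \le \tilde{X}$, with equality exactly when $\phi=0$. Your additional observation that for $\phi>0$ the expression $\pi_2$ must be read as zero once $X>(\alpha-c)/\phi$ (where the entrant's quantity would be negative) is a point the paper passes over silently when it calls $\pi_2$ ``monotonically decreasing,'' and it tightens the argument without changing its structure.
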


\begin{proof}
It can be verified that $\pi_{1}(\tilde{X}) = \pi_{2}(\tilde{X})$ and $\pi_{1}'(\tilde{X}) = \pi_{2}'(\tilde{X})$, so $\pi_{E}^{*}(X)$ is continuous and continuously differentiable. The first regime, $\pi_{1}(X)$, is a strictly concave quadratic function of $X$ attaining its maximum at $X_{\pi E}^{*}=(\alpha-c)/2\theta$. The second regime, $\pi_{2}(X)$, is a convex, monotonically decreasing function of $X$ if $\phi>0$. Since $\theta>\phi>0$, we have $2\theta > 2\theta-\phi$, and thus $X_{\pi E}^{*}=(\alpha -c)/(2\theta) < \tilde{X}$ is the unique maximum of the profit function. If $\phi=0$ , $\pi_{2}(X)$ is constant over the range $X \ge \tilde{X}$. Moreover, $(\alpha -c)/2\theta = \tilde{X}$ so that the profit function attains its maximum for any $X \ge \tilde{X}$ .
\end{proof}

\noindent
Entry occurs if $\pi_{E}^{*}(X)>R$ . Assuming: 
\begin{equation}
R \le \max\ {\pi_E^*(X)}=\frac{1}{\beta}\frac{\theta -\phi}{\theta}\left(\frac{\alpha-c}{2}\right)^2
\end{equation}

\noindent
we define an accommodation region $\mathbb{X}_{A}(R)$ and a deterrence region $\mathbb{X}_{D}(R)$. If $\phi>0$, $\mathbb{X}_{A}(R)=(X_{L}(R),X_{H}(R))$, where $X_{L}(R)$ and $X_{H}(R)$ are the roots of $\pi_{E}^{*}(X)=R$. If $\phi=0$, the accommodation region is $(X_L(R), \infty)$. If $\phi>0$, $\mathbb{X}_{D}(R)=[0, X_L(R)] \cup [X_H(R), (\alpha-c)/\phi]$. If $\phi=0$, $\mathbb{X}_{D}=[0,X_L (R)]$.

\subsection{Incumbent's Optimal Strategy}

Anticipating the rival's strategy, the incumbent chooses whether to accommodate or deter entry such as to maximize its own total profits.

\textbf{Profit under Accommodation ($X\in\mathbb{X}_{A}(R)$):} In this case, the incumbent would earn zero profit post-entry. Accordingly, incumbent's maximum profits under accommodation are defined as:
\begin{equation}
\Pi_{A}^*(R)=\max_{X \in {\mathbb{X_A}(R)}} \ \Pi_A(X)= \Pi_1(X) = (\alpha-c-\beta X)X
\label{eq:profitA}
\end{equation}
\textbf{Profit under Deterrence ($X\in\mathbb{X}_{D}(R)$):} The incumbent remains a monopolist during the second period, so it earns second period profits equal to $\pi_2^M(X) = (\alpha-c - \theta X)^2 /4\beta$. Overall profits under deterrence are given by expression $\Pi ^M(X)$ above, and we define
\begin{equation}
\Pi_{D}^*(R)= \max_{X \in \mathbb{X}_D(R)} \Pi^{M}(X)
\label{eq:profitD}
\end{equation}
The following theorem establishes that deterrence (or blockade) will be the incumbent's optimal strategy when entry costs are positive. Only when rival entry is costless, the incumbent will be indifferent between deterrence and accommodation.

\begin{theorem}
Under Bertrand competition with differentiated entry ($\theta>\phi$), $\Pi_{D}^*(R) \ge \Pi_{A}^*(R)$ whenever $R \ge 0$. 
\label{thm:deterrence}
\end{theorem}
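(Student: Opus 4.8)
The plan is to bound both quantities by a single scalar---the unconstrained first-period monopoly profit---and to exhibit one point of the deterrence region that already attains it, thereby avoiding any case analysis on where the accommodation interval sits. First I would record the elementary fact that $\Pi_1(X)=(\alpha-c-\beta X)X$ is a concave parabola maximized at the static optimum $X_S^*=(\alpha-c)/(2\beta)$, with maximal value $\Pi_1(X_S^*)=(\alpha-c)^2/(4\beta)$. Since under accommodation the incumbent collects only this first-period profit (its post-entry profit being zero by Lemma \ref{lemma:pricing}), one obtains the crude but sufficient upper bound
\[
\Pi_{A}^*(R)=\max_{X\in\mathbb{X}_{A}(R)}\Pi_1(X)\le \max_{X}\Pi_1(X)=\frac{(\alpha-c)^2}{4\beta},
\]
valid for every $R\ge 0$ and regardless of whether the supremum over the open interval $\mathbb{X}_{A}(R)$ is actually attained.

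The key observation is that $X=0$ always lies in the deterrence region. Indeed $\pi_{E}^*(0)=0\le R$, so producing nothing in the first period deters entry and $0\in[0,X_L(R)]\subset\mathbb{X}_{D}(R)$. At this point the incumbent forgoes first-period revenue but retains an undepleted second-period monopoly, so that
\[
\Pi^{M}(0)=\Pi_1(0)+\pi_2^{M}(0)=0+\frac{(\alpha-c)^2}{4\beta}=\frac{(\alpha-c)^2}{4\beta}.
\]
The arithmetic coincidence here is structural: at $X=0$ the second-period market is identical to the first-period market, so the second-period monopoly profit equals the first-period monopoly profit. Because $0\in\mathbb{X}_{D}(R)$, the constrained deterrence optimum can only do weakly better, giving $\Pi_{D}^*(R)\ge\Pi^{M}(0)=(\alpha-c)^2/(4\beta)$. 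Chaining the two displays yields $\Pi_{D}^*(R)\ge(\alpha-c)^2/(4\beta)\ge\Pi_{A}^*(R)$ for all $R\ge 0$, which is the claim. In the blockade regime $R>\max_X\pi_{E}^*(X)$ the accommodation region is empty and the inequality is vacuous.

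I expect the only real obstacle to be resisting the temptation to compare the two optimization problems pointwise or to locate the accommodation maximizer explicitly, which would force a split according to whether $X_S^*$ falls inside, to the left of, or to the right of $(X_L(R),X_H(R))$ and would additionally have to contend with the open endpoints $X_L(R),X_H(R)$ (where, incidentally, $\pi_2^{M}\ge 0$ would again rescue the argument via $\Pi^{M}(X_L)\ge\Pi_1(X_L)$). The device of dominating the \emph{entire} accommodation problem by the single deterrence point $X=0$ sidesteps all of this, and it makes the economic content transparent: conserving all first-period effectiveness already secures a second-period monopoly worth as much as the best attainable first-period-only outcome, so deterrence is never inferior to accommodation.
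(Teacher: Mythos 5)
Your proof is correct, and it takes a genuinely different route from the paper's. The paper defines the difference function $\Delta(R)=\Pi_{D}^{*}(R)-\Pi_{A}^{*}(R)$, explicitly characterizes both regions at $R=0$ to establish the anchor $\Delta(0)=0$, and then argues by comparative statics in $R$: as $R$ rises the accommodation region shrinks (so $\Pi_{A}^{*}$ weakly falls) while the deterrence region expands (so $\Pi_{D}^{*}$ weakly rises), with strictness obtained from $\partial \Pi^{M}/\partial X\big\rvert_{X=0}=(2\beta-\theta)(\alpha-c)>0$. You instead sandwich both value functions, uniformly in $R$, by the single constant $(\alpha-c)^2/(4\beta)$: accommodation can never exceed the unconstrained static maximum of $\Pi_1$, while $X=0$ is always feasible for deterrence and attains exactly that value, because an undepleted second-period monopoly replicates the first-period market. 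Your argument is shorter, requires no case split on $\phi$, no nestedness argument for the regions as $R$ varies, and disposes of the open endpoints of $\mathbb{X}_{A}(R)$ for free. What the paper's route buys in exchange is strictness: its monotonicity-plus-derivative step yields $\Pi_{D}^{*}(R)>\Pi_{A}^{*}(R)$ for every $R>0$, which underwrites the surrounding claim that the incumbent is indifferent \emph{only} when entry is costless. Your bounds alone deliver only the weak inequality (which is all the theorem asserts); if you want the strict version as well, one extra line suffices: for $R>0$ one has $X_L(R)>0$, and since $\Pi^{M}$ is strictly increasing on $[0,X_M^*]$, the feasible point $\min\{X_L(R),X_M^*\}\in\mathbb{X}_{D}(R)$ gives $\Pi_{D}^{*}(R)>\Pi^{M}(0)\ge\Pi_{A}^{*}(R)$.
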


\begin{proof}
Define the difference function $\Delta(R)=\Pi_{D}^{*}(R)-\Pi_{A}^{*}(R)$. We first establish that $\Delta(R)=0$ for $R=0$. 

If $R=0$, the accommodation region is the set where $\pi_E^*(X)>0$. Thus, $\mathbb{X}_{A}(0)=(0,(\alpha-c)/\phi)$ if $\phi>0$, or $\mathbb{X}_{A}(0)=(0, \infty)$ if $\phi=0$. By Assumption \ref{assump:resistance}, the static optimum $X_{S}^{*}=(\alpha-c)/2\beta$ lies within this interval. The maximum accommodation profit is $\Pi_{A}^{*}(R)|_{R=0}=\Pi_{A}(X_{S}^{*})=(\alpha-c)^{2}/4\beta$. The deterrence region is the set $\mathbb{X}_{D}(0)= [0] \cup [(\alpha -c)/\phi, \infty)$ when $\phi>0$ and $\mathbb{X}_{D}(0)= [0]$ when $\phi=0$. Either way $\Pi_D(X)$ attains its maximum at $X=0$. Since $\Pi_D^*(R)|_{R=0}=(\alpha -c)^2/4\beta$, we conclude that $\Delta(0)=0$. When entry is costless, the incumbent is indifferent between accommodation and deterrence.

Next, we show that $\Delta(R)>0$ if $R>0$. First, note that as $R$ increases, $\mathbb{X}_{A}(R)$ shrinks, so $\Pi_{A}^{*}(R)$ is weakly decreasing in $R$. Conversely, as $R$ increases, $\mathbb{X}_{D}(R)$ expands, implying $\Pi_{D}^{*}(R)$ is weakly increasing in $R$. In particular, it is strictly increasing at $X=0$ since:
\[
\frac{\partial{\Pi_M}}{\partial{X}}\bigg\rvert_{X=0}=(2\beta -\theta)(\alpha-c)>0.
\]
Thus, deterrence is profit-maximizing when entry costs are positive.
\end{proof}

\subsection{Output Choice for Optimal Deterrence}

Since deterrence is always preferred (Theorem \ref{thm:deterrence}), if entry is not blockaded at $X_{M}^{*}$, the incumbent must distort its output to the boundary of $\mathbb{X}_{D}(R)$ closest to $X_{M}^{*}$.

\begin{theorem} \label{thm:conservation}
Under Bertrand competition with $\theta > \phi$ and Assumption~\ref{assump:markup} ($0 < \theta < 2\beta$), if strategic deterrence is required (i.e., $X_M^* \notin \mathbb{X}_D(R)$), the incumbent's deterrence-optimal first-period output is
\[
X^* = X_L(R).
\]
\end{theorem}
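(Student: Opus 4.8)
The plan is to reduce the constrained maximization of $\Pi^M$ over $\mathbb{X}_D(R)$ to a comparison of just two candidate points, and then to settle that comparison with a symmetry argument. First I would record that $\Pi^M(X)=\Pi_1(X)+\pi_2^M(X)$ is a quadratic in $X$ whose leading coefficient $(\theta^2-4\beta^2)/(4\beta)$ is negative under Assumption~\ref{assump:markup}; hence $\Pi^M$ is a strictly concave parabola, symmetric about its vertex $X_M^*$. Because strategic deterrence is required, $X_M^*\in\mathbb{X}_A(R)=(X_L(R),X_H(R))$, so $X_L(R)<X_M^*<X_H(R)$. On the left component $[0,X_L(R)]$ of the deterrence region $\Pi^M$ is increasing (it lies left of the vertex), so its maximum there is attained at $X_L(R)$; on the right component $[X_H(R),(\alpha-c)/\phi]$ it is decreasing, so its maximum there is attained at $X_H(R)$. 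When $\phi=0$ the right component is absent and the claim is immediate, so it remains to show $\Pi^M(X_L(R))\ge\Pi^M(X_H(R))$.

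By the symmetry of the parabola about $X_M^*$, and since $X_L<X_M^*<X_H$, this inequality is equivalent to $X_L(R)+X_H(R)\ge 2X_M^*$, i.e.\ the admissible left boundary is no farther from the vertex than the admissible right boundary. I would prove this sum inequality using the geometry of $\pi_E^*$ established in Lemma~\ref{lemma:entrant_profit}: the profit is single-peaked at $X_{\pi E}^*=(\alpha-c)/(2\theta)$, and the left root $X_L$ always lies in the concave regime $\pi_1$. If the right root also satisfies $X_H\le\tilde{X}$, then $X_L$ and $X_H$ are the two roots of $\pi_1(X)=R$ and are symmetric about the vertex $X_{\pi E}^*$ of $\pi_1$, giving $X_L+X_H=2X_{\pi E}^*=(\alpha-c)/\theta$; and $(\alpha-c)/\theta>2(\alpha-c)/(2\beta+\theta)=2X_M^*$ reduces to $2\beta>\theta$, which is exactly Assumption~\ref{assump:markup}.

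The main obstacle is the remaining case $X_H>\tilde{X}$, where $X_H$ lies in the convex regime $\pi_2$ and the two roots are no longer symmetric. Here I would exploit the global comparison $\pi_2(X)\ge\pi_1(X)$: the difference $\pi_2-\pi_1$ vanishes together with its first derivative at $\tilde{X}$ (Lemma~\ref{lemma:entrant_profit}) and is convex, since $\pi_2''>0>\pi_1''$, so it attains a global minimum of $0$ at $\tilde{X}$. Let $\hat{X}_H>X_{\pi E}^*$ denote the right root of the extended parabola $\pi_1(\hat{X}_H)=R$; since $R<\pi_1(\tilde{X})$ in this case, monotonicity of $\pi_1$ past its peak gives $\hat{X}_H>\tilde{X}$. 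Then $\pi_2(\hat{X}_H)\ge\pi_1(\hat{X}_H)=R=\pi_2(X_H)$, and because $\pi_2$ is strictly decreasing on $(\tilde{X},(\alpha-c)/\phi)$ this forces $X_H\ge\hat{X}_H$. Consequently $X_L+X_H\ge X_L+\hat{X}_H=2X_{\pi E}^*=(\alpha-c)/\theta>2X_M^*$ as before. Combining the two cases yields $\Pi^M(X_L(R))\ge\Pi^M(X_H(R))$, so the deterrence-optimal output is $X^*=X_L(R)$, as claimed.
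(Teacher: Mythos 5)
Your proof is correct and follows essentially the same route as the paper's: reduce the problem by strict concavity of $\Pi^M$ to comparing the two boundary candidates $X_L(R)$ and $X_H(R)$, handle $\phi=0$ separately, and then show $X_L(R)$ is closer to $X_M^*$ by pivoting on the entrant's profit peak $X_{\pi E}^*$ together with $X_M^* < X_{\pi E}^*$ (which is exactly Assumption~\ref{assump:markup}). The only difference is that where the paper simply asserts the root-asymmetry inequality $X_{\pi E}^*-X_L(R)\le X_H(R)-X_{\pi E}^*$, you actually prove it --- splitting on whether $X_H$ lies in the concave regime $\pi_1$ (exact symmetry of the parabola's roots) or the convex regime $\pi_2$ (using $\pi_2\ge\pi_1$ to show $X_H$ lies weakly beyond the symmetric root $\hat{X}_H$) --- so your write-up supplies a detail the paper leaves implicit.
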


\begin{proof}
The two-period monopoly objective $\Pi^M(X) = \Pi_1(X) + \pi_2^M(X)$ is a strictly concave quadratic with unique maximizer
\[
X_M^* = \frac{\alpha - c}{2\beta + \theta}.
\]
Hence there exists $K > 0$ such that
\[
\Pi^M(X) = \Pi^M(X_M^*) - K(X - X_M^*)^2,
\]
so among any two feasible $X$, the one closer to $X_M^*$ yields the larger $\Pi^M$.

\textit{Case 1: $\boldsymbol{\phi = 0}$.}
Here $\mathbb{X}_D(R) = [0, X_L(R)]$. Strategic deterrence requires $X_L(R) < X_M^*$, and since $\Pi^M$ is increasing on $[0, X_M^*]$, the best choice in $[0, X_L(R)]$ is $X_L(R)$.

\textit{Case 2: $\boldsymbol{\phi > 0}$}.
Now $\mathbb{X}_D(R) = [0, X_L(R)] \cup [X_H(R), (\alpha - c)/\phi]$. Thus, optimal deterrence is achieved by choosing whichever output level between $X_L(R)$ and $X_H(R)$ is closest to $X_M^*$. Our claim amounts then to showing that:
\[
X_M^*-X_L(R) <  X_H(R)-X_M^*
\]
or, equivalently, that:
\[
(X_M^*-X_{\pi E}^*)+(X_{\pi E}^*-X_L(R)) <  (X_H(R)-X_{\pi E}^*)+(X_{\pi E}^*-X_M^*)
\]
This is true since $(X_M^*-X_{\pi E}^*) < 0$ and $(X_{\pi E}^*-X_L(R))\leq  (X_H(R)-X_{\pi E}^*)$.
\end{proof}

\subsection{Equilibrium Characterization}

We summarize the results of our analysis and provide a characterization of the incumbent's strategy at the subgame perfect Nash equilibrium of our model of post-entry Bertrand competition.

\begin{proposition}
Let $\overline{R}_{B}$ be the entry cost such that $X_{L}(\overline{R}_{B})=X_{M}^{*}$. At the subgame perfect Nash equilibrium, entry by a vertically differentiated drug will be:
\begin{enumerate}
    \item blockaded ($X^{*}=X_{M}^{*}$) if  $\overline{R}_{B}\le R$. 
    \item deterred ($X^{*}=X_{L}(R)<X_{M}^{*}$) if  $0<R<\overline{R}_{B}$. 
    \item accommodated ($X^*=X_S^*>X_M^*$) if $R=0$. 
\end{enumerate}
\noindent
Deterrence promotes conservation of the incumbent drug's effectiveness.  Faster dissipation occurs only when entry by a more effective antibiotic is costless.
\label{prop:equilibrium}
\end{proposition}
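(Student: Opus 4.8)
The plan is to assemble the proposition from the machinery already in place — Lemma~\ref{lemma:entrant_profit} (the shape of $\pi_E^*$), Theorem~\ref{thm:deterrence} (deterrence weakly dominates, strictly for $R>0$), and Theorem~\ref{thm:conservation} (the deterrence-optimal output is $X_L(R)$) — together with a comparative-statics analysis of the boundary $X_L(R)$ in the entry cost $R$. The organizing observation is that, under Assumption~\ref{assump:markup}, the two-period monopoly optimum satisfies $X_M^* = (\alpha-c)/(2\beta+\theta) < (\alpha-c)/(2\theta) = X_{\pi E}^*$, so $X_M^*$ lies strictly on the increasing branch of the entrant's profit function. This makes $\overline{R}_B = \pi_E^*(X_M^*)>0$ well-defined and strictly positive and anchors the whole threshold argument.

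First I would record the monotonicity of the deterrence region. Since $\pi_E^*$ rises from $0$ to its peak at $X_{\pi E}^*$ and then falls, the lower root $X_L(R)$ is strictly increasing in $R$ and the upper root $X_H(R)$ strictly decreasing, with $X_L(R)$ approaching $X_{\pi E}^*$ from below as $R \to \max \pi_E^*$. Consequently $X_M^* \in \mathbb{X}_D(R)$ if and only if $X_L(R) \ge X_M^*$, i.e.\ if and only if $R \ge \overline{R}_B$.

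Then I would dispatch the three cases. For $R \ge \overline{R}_B$ (blockade), monotonicity gives $X_L(R) \ge X_M^*$, so $X_M^* \in [0, X_L(R)] \subseteq \mathbb{X}_D(R)$; since $X_M^*$ globally maximizes $\Pi^M$ and deterrence strictly dominates accommodation by Theorem~\ref{thm:deterrence} (here $R>0$), the incumbent sets $X^* = X_M^*$ with no distortion. For $0 < R < \overline{R}_B$ (deterrence), monotonicity gives $X_L(R) < X_M^* < X_{\pi E}^* < X_H(R)$, so $X_M^* \in \mathbb{X}_A(R)$ and strategic deterrence is required; Theorem~\ref{thm:deterrence} makes deterrence strictly optimal and Theorem~\ref{thm:conservation} identifies the deterrence-optimal output as $X_L(R) < X_M^*$. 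For $R = 0$ (accommodation), Theorem~\ref{thm:deterrence} established indifference ($\Delta(0)=0$); under accommodation post-entry profit is zero, so the incumbent maximizes $\Pi_1$ alone, giving $X_S^* = (\alpha-c)/2\beta$, which lies in $\mathbb{X}_A(0)$ by Assumption~\ref{assump:resistance} and satisfies $X_S^* > X_M^*$ since $2\beta < 2\beta+\theta$.

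The conservation conclusion then follows by comparing each equilibrium output with the no-threat benchmark $X_M^*$: deterrence yields $X^* = X_L(R) < X_M^*$ (conservation), blockade reproduces $X_M^*$, and only costless entry yields $X^* = X_S^* > X_M^*$ (faster dissipation). I expect the main obstacle to be the monotonicity and threshold bookkeeping rather than any deep argument: specifically, invoking Assumption~\ref{assump:markup} to place $X_M^*$ on the increasing branch of $\pi_E^*$ so that $\overline{R}_B$ exists and is positive, and keeping the $\phi = 0$ and $\phi > 0$ subcases consistent throughout — in the former the deterrence region loses its upper component $[X_H(R),(\alpha-c)/\phi]$, but since the equilibrium in every regime selects either $X_M^*$, $X_L(R)$, or $X_S^*$ (all on the lower branch), the characterization is unaffected.
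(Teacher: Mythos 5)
Your proposal is correct and follows essentially the same route as the paper, which states Proposition~\ref{prop:equilibrium} as a summary assembled from Lemma~\ref{lemma:entrant_profit}, Theorem~\ref{thm:deterrence}, and Theorem~\ref{thm:conservation}. Your additions --- making explicit that Assumption~\ref{assump:markup} places $X_M^*$ on the increasing branch of $\pi_E^*$ (so $\overline{R}_B=\pi_E^*(X_M^*)>0$ is well-defined), and the monotonicity of $X_L(R)$ and $X_H(R)$ in $R$ --- are exactly the bookkeeping the paper leaves implicit, handled consistently across the $\phi=0$ and $\phi>0$ cases.
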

Figure \ref{fig:deterrence} illustrates the case at (2.) for the following parameter values: $ \alpha=10,\beta=2, \theta=2, c=2, \phi=1/2$.
\bigskip
\begin{figure}[H]
    \begin{tikzpicture}
    
    \begin{axis}[
    axis lines = left,
    xmin=0,
    xmax=5,
    ymin=0,
    ymax=11,
    xlabel = {$X$},
    ylabel = {Profit},
    ytick={},
    xtick = {1.2, 3.35},
    xticklabels = {$X_L(R)=X^*$, $X_H(R)$},
    xticklabel style={yshift=-0.2cm},
    legend style={at={(0.98,0.98)}, anchor=north east, draw=none, fill=white, font=\small},
    legend cell align={left},
    width=16cm,
    height=12cm,
]
    % Entrant profit function
    \addplot [line width=1, domain=0:{8/3.5}, samples=100, color=red] 
                {(1.5*x*(8 - 2*x)) / 2};
    \addlegendentry{$\pi_E(X)$}
    
    % Deterrence profit (optimal segments)
    \addplot[line width=2, domain=0:1.2, samples=100]{(8-2*x)*(1+0.75*x)};
        \addlegendentry{$\Pi_D^*$ }
    % Accommodation profit (optimal segment)
    \addplot[line width=2, domain=1.2:3.35, samples=100, color=black!50]{(8-2*x)*x};
    \addlegendentry{$\Pi_A^*$ }
 
    % Entry cost line
    \addplot[line width=1, domain=0:5, samples=100, color=red!75, dashed]{5};
    \addlegendentry{ $R$}
    
    % Entrant profit function for convex part 
    \addplot[line width=1, domain={8/3.5}:5, samples=100, color=red]
                {(8 - 0.5*x)^2 /8};
    % Deterrence profit (optimal segment above X(R))
    \addplot[line width=2, domain=3.35:4, samples=100, color=black!100]{(8-2*x)*(1+0.75*x)};
    
    % Deterrence profit (full curve for context)
    \addplot[line width=0.8, domain=0:5, samples=100, color=black, dotted]{(8-2*x)*(1+0.75*x)};
   
    % Accommodation profit (full curve for context)
    \addplot[line width=0.8, domain=0:5, samples=100, color=black!75, dotted]{(8-2*x)*x};
        
    % Intersection points
    \addplot[only marks, mark=*, mark size=2, color=red, fill=red] coordinates {(1.2,5) (3.35,5)};
    \addplot[only marks, mark=*, mark size=2, fill=black] coordinates {(1.2,10.65) (2,8)(3.35, 4.5)};
    \addplot[only marks, mark=*, mark size=2, draw=black, fill=white, line width=1] coordinates {(1.2,0) (3.35,0)};
    
    % Vertical lines
    \draw[densely dotted, black!70, line width=1.2] (axis cs:1.2,10.7) -- (axis cs:1.2,0);
    \draw[densely dotted, black!70, line width=1.2] (axis cs:3.35,4.5) -- (axis cs:3.35,0);
    
    % X-axis labels
    \node[anchor=north, font=\normalsize] at (axis cs:1,-1) {$X_L(R)=X^*$};
    \node[anchor=north, font=\normalsize] at (axis cs:4,-0.6) {$X_H(R)$};
    
    % Region labels
    \node[anchor=south, font=\small, fill=white] at (axis cs:0.6,0.75) {Deterrence region};
    \draw[ultra thick, black!75, <->] (axis cs:0,0.5) -- (axis cs:1.2,0.5);
      \node[anchor=south, font=\small, fill=white] at (axis cs:4.15,0.75) {Deterrence region};
    \draw[ultra thick, black!75, <->] (axis cs:3.35,0.5) -- (axis cs:5,0.5);
    
    \node[anchor=south, font=\small, fill=white] at (axis cs:2.25,0.75) {Accommodation region};
    \draw[ultra thick, red!75, <->] (axis cs:1.2,0.5) -- (axis cs:3.35,0.5);
    
\end{axis}
\end{tikzpicture}
\captionsetup{justification=centering, font=small}
\caption{Optimal deterrence with underproduction}
\label{fig:deterrence}
\end{figure}
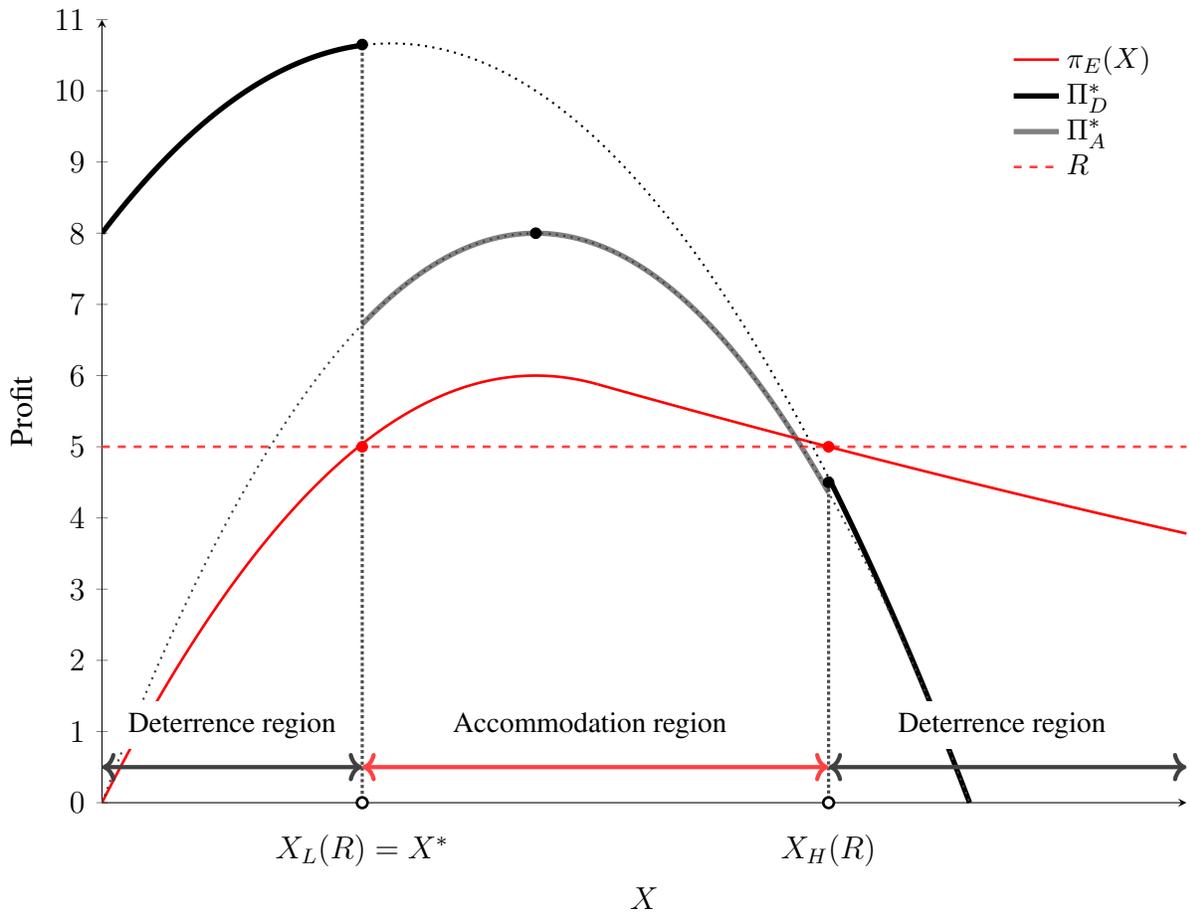

\bigskip
\section{Conclusions}
We extend the results in \citet{Mazzoleni2025} exploring conditions under which the threat of rival entry promotes strategic deterrence through antibiotic conservation. When Bertrand competition prevails post-entry rather than Cournot, strategic deterrence is an optimal response over a broader set of conditions on (a) the rival drug's development cost, and (b) the degree of bacterial cross-resistance to the prospective rival. Whereas in the case of Cournot competition deterrence calls for either dissipating or conserving the incumbent drug's effectiveness, Bertrand competition makes the effectiveness gap between incumbent and rival drugs a crucial strategic concern.  Accordingly, deterrence is achieved by conserving the incumbent drug's effectiveness and limiting the prospective quality gap with the rival drug.

Our model and \citet{Mazzoleni2025} show that predicted patterns of market competition play a significant role in the management of antibiotic effectiveness. The significance of these findings is heightened by empirical evidence challenging the widespread belief that anticipation of generic entry accelerates the development of bacterial resistance.  For instance, \citet{Kållbergetal2021}'s study of prescription volumes and average prices for antibiotics in the US found no consistent pattern across a sample of drugs for which generic entry occurred at patent expiration. Furthermore, generics for off-patent drugs, including antibiotics, are not always available in the market \citep{Guptaetal2018}.  Further theoretical exploration of the strategic implications of bacterial cross-resistance (and cross-sensitivity) and patterns of post-entry competition will contribute to better understand what policy tools can address effectively the antibiotic resistance crisis. 

\section*{Funding sources}
This research did not receive any specific grant from funding agencies in the public, commercial, or not-for-profit sectors.
\bibliographystyle{elsarticle-harv}
\bibliography{EcoLet_References}

\end{document}